\documentclass[letterpaper, 10 pt, conference]{ieeeconf}

\IEEEoverridecommandlockouts
\title{\LARGE \bf
Flatness-Based Geometric Path Following via Guiding Vector Fields*}

\author{J.P. van Gool$^{1}$, H.G. de Marina$^{2}$, B. Jayawardhana$^{1}$, and S. Ahmed$^{1}$ %
\thanks{*This work was supported by the NWO NGF NXTGEN HighTech project ``FARMLAB" under project number NGF.1678.25.012.}
\thanks{$^{1}$Engineering and Technology Institute Groningen, Faculty of Science and Engineering, University of Groningen, Groningen, The Netherlands (e-mail: j.p.van.gool@rug.nl, b.jayawardhana@rug.nl, s.ahmed@rug.nl)}
\thanks{$^{2}$Institute of Mathematics of the University of Granada (IMAG), University of Granada, Granada, Spain (e-mail: hgdemarina@ugr.es)}%
}

\let\clsproof\proof
\let\endclsproof\endproof
\let\proof\relax
\let\endproof\relax
\usepackage{amsthm}
\let\proof\clsproof
\let\endproof\endclsproof
\renewcommand{\QED}{\QEDopen}

\usepackage{amsmath, amssymb}
\usepackage{hyperref}
\usepackage{graphicx}
\usepackage{booktabs}
\usepackage{xfrac}

\newtheorem{proposition}{Proposition}

\theoremstyle{definition}
\newtheorem{definition}{Definition}
\theoremstyle{remark}
\newtheorem{remark}{Remark}
\usepackage{etoolbox}
\AtEndEnvironment{remark}{\hfill$\blacktriangleleft$}

\usepackage{xcolor}

\begin{document}
\maketitle
\thispagestyle{empty}
\pagestyle{empty}

\begin{abstract}
Path following requires an agent to converge to and traverse a geometric path without a prescribed timing law. Guiding vector fields (GVFs) address this directly, but as guidance laws they offer no mechanism to compensate for the agent's own dynamics. Differential flatness-based control (DFBC), on the other hand, compensates for the agent's nonlinear dynamics but relies on a time-parameterized reference trajectory. This letter presents a unified algorithm, which we term \emph{flatness-based geometric control} (FGC), that generates the DFBC reference hierarchy directly from a GVF and its higher-order time derivatives, eliminating the need for a pre-planned time-parameterized trajectory while retaining full dynamics-informed control. A cascade Lyapunov argument establishes exponential convergence of the tracking error and the physical trajectory to the desired path. The framework is further extended to singularity-free GVFs, which removes the topological obstruction and guarantees the condition for global exponential convergence on closed or self-intersecting paths. The proposed algorithm is validated in simulation.
\end{abstract}
\begin{keywords}
Algebraic/geometric methods, nonlinear systems, feedback linearization, autonomous systems, robotics.
\end{keywords}

\section{INTRODUCTION}
\noindent Autonomous navigation along geometric paths arises in many robotic applications, from infrastructure inspection and surveillance to formation flight and cooperative transport~\cite{chung_survey_2018},~\cite{loianno_special_2020}. For agile platforms such as quadrotors, the challenge is compounded by underactuation and strongly nonlinear dynamics. A controller must simultaneously compensate for internal forces, track higher-order motion references, and remain robust to disturbances~\cite{sun_comparative_2022}. A further complication arises when the mission geometry, rather than a timing law, is the primary constraint. In such settings, the classical trajectory tracking formulation, in which a time-parameterized reference is tracked as closely as possible, is unnecessarily restrictive and its transient behavior depends on the initial condition of the state and that of the reference signal. Path following, where convergence is defined relative to a geometric curve, is the more natural objective~\cite{zhao_integrating_2018}. 

Geometric approaches, such as guiding vector fields (GVFs), address this directly by constructing a vector field whose integral curves converge to and traverse a desired path without imposing any timing law~\cite{yao_guiding_2023}, yielding a purely spatial feedback law with provable convergence. A traditional limitation of conventional GVF is the existence of singular points. This is resolved by~\cite{yao_singularity-free_2021} via so-called singularity-free GVFs, which lift the path into an extended state space via a virtual coordinate, guaranteeing global convergence. Further extensions handle occluded paths and unbounded curves~\cite{yao_guiding_2022},~\cite{ feurgard_extending_2024}. However, GVFs are not stand-alone controllers. They prescribe a velocity direction, and the response to internal dynamics and model nonlinearities falls to a lower-level tracking controller. This leaves a gap for platforms with non-trivial internal forces.

Differential flatness-based control (DFBC) is a natural candidate for closing this gap, since it exploits the algebraic structure of flat systems to cancel nonlinear drift and recover linear, tunable closed-loop error dynamics~\cite{fliess_flatness_1995, mellinger_minimum_2011}. For quadrotors, differential flatness is a standard tool for agile flight~\cite{sun_comparative_2022, poultney_robust_2018}, with current extensions looking at rotor drag and model predictive control~\cite{faessler_differential_2018, greeff_flatness-based_2018}.

However, DFBC does not itself address the path-following problem. It relies on a time-parameterized reference trajectory, and any mismatch between the planned timing and the actual motion introduces avoidable tracking error, particularly for infeasible or aggressive paths~\cite{sun_comparative_2022}.

The two limitations identified above are complementary: DFBC lacks a mechanism for generating a reference without a pre-planned timing law, while GVFs, as guidance laws, do not by themselves provide dynamics-informed control. This letter aims to combine the time-independent nature of geometric path following with the algebraic structure of differentially flat systems, which enables cancellation of nonlinear drift, into a single control law. This combination is motivated by agile flight applications, such as aggressive quadrotor maneuvering or aerobatics, where a mission is naturally specified by geometry rather than a timing law, but where high-speed performance still demands dynamics-informed control rather than kinematic guidance alone.

The combination of the two approaches is not without precedent. One line of work applies the GVF directly at the kinematic level~\cite{zhao_integrating_2018, kapitanyuk_guiding_2018, hu_spontaneous-ordering_2023}, where the guidance law supplies only the field's current value, and any change along the agent's motion is corrected reactively, without compensating for internal dynamics. A tighter integration appears in~\cite{zhou_vector_2014}, which uses differential flatness to map a desired velocity field directly to nonlinear control inputs, enabling the quadrotor to follow a vector field as if it were a simple integrator. There, flatness tracks the field's current value rather than anticipating its change, with no bound on the resulting tracking error, precisely what matters in the aggressive regimes this letter targets.

The direction taken here is the converse. The GVF and its higher-order time derivatives, themselves $k$-th order guiding vector fields, are used to generate the reference that feeds into the DFBC law, anticipating changes in the field rather than correcting for them reactively, while still eliminating the need for a pre-planned time trajectory and retaining full dynamics-informed control within a single closed-loop design. To the best of the authors' knowledge, this approach has not previously been proposed.

The contributions of this letter are as follows: (i) a unified algorithm that uses flatness-based, dynamics-informed control driven by spatially generated references rather than time-parameterized ones, with exponential convergence established via a cascade Lyapunov argument, and (ii) an extension to singularity-free GVFs, which removes the topological obstruction and guarantees the condition for global exponential convergence. Both results are validated in simulation.

The remainder of this letter is structured as follows. Section~\ref{sec:preliminaries} reviews the background on DFBC and GVFs. Section~\ref{sec:algorithm} presents the proposed algorithm, derives the reference hierarchy, and provides the stability and convergence analysis. Section~\ref{sec:simulations} presents numerical simulations, and Section~\ref{sec:conclusion} concludes the letter.

\section{PRELIMINARIES}\label{sec:preliminaries}
\subsection{Notation}
\noindent Throughout the letter, the agent's position is denoted by ~${\xi \in \mathbb{R}^n}$, the full system state by $x \in \mathbb{R}^{n_x}$, and the flat output by $z = h(x) \in \mathbb{R}^m$. The superscript $(k)$ denotes the $k$-th time derivative along system trajectories, e.g., $\chi^{(k)}$, $\xi^{(k)}$, with $\chi^{(0)} \triangleq \chi$. The notation $\mathcal{D}^p\chi$ denotes the $p$-th order derivative of $\chi$ as a symmetric $p$-linear map, and $\mathcal{B}_{k,p}$ denotes the partial Bell polynomial, which sums over all ways of partitioning the derivative order $k$ into $p$ factors~\cite{comtet_advanced_1974}. The symbol $\circ$ denotes the Hadamard product and power. The operator $\nabla_\times\phi(\xi)$ denotes a vector field spanning the null space of the path-error Jacobian $\nabla\phi(\xi)^\top$.

\subsection{Differential flatness-based control}\label{sec:dfbc}
\noindent Consider a nonlinear control-affine system
\begin{equation}
    \dot{x} = f(x) + g(x)\,u,
    \label{eq:system}
\end{equation}
where $x \in \mathbb{R}^{n_x}$ is the state, $u \in \mathbb{R}^m$ is the control input, and $f$, $g$ are smooth maps.

\begin{definition}[Differentially flat system]\label{def:flat}
System~\eqref{eq:system} is {\em differentially flat} if there exists a variable $z \in \mathbb{R}^m$, of the same dimension as the input and with differentially independent components, such that $z = \Lambda(x, u, \dot{u}, \ldots, u^{(j)})$ and, conversely, $x = \Psi_x(z, \ldots, z^{(q-1)})$ and $u = \Psi_u(z, \ldots, z^{(q)})$, for some finite $j, q \in \mathbb{N}$ and smooth maps $\Lambda, \Psi_x, \Psi_u$ defined on open sets. Such a $z$ is called a {\em flat output} \cite{fliess_flatness_1995}.
\end{definition}
Flatness thus allows the state and input to be recovered algebraically from $z$ and its derivatives, so any sufficiently smooth $t \mapsto z(t)$ for which the flat parametrization is well-defined generates an admissible trajectory of~\eqref{eq:system}, a property exploited in DFBC. Throughout this letter, the flat output is a function of the state alone, $z = h(x)$.

For system~\eqref{eq:system} with uniform relative degree $\rho$ with respect to $z$, differentiating $z = h(x)$ along system trajectories yields the compact input-output form:
\begin{equation}
    z^{(\rho)} = F(x) + G(x)\,u,
    \label{eq:io}
\end{equation}
where $F(x) = L_f^{\rho} h(x)$ is the nonlinear drift, i.e., the contribution of the vector field $f$, and $G(x) =L_g L_f^{\rho-1}h(x)$ is the decoupling matrix~\cite{isidori_nonlinear_1995}. Inverting~\eqref{eq:io}, and prescribing the desired $\rho$-th derivative of $z$ via a reference trajectory $z_r$ yields the differential flatness-based control law:
\begin{equation}
    u = \underbrace{G(x)^{-1}\!\left(z_r^{(\rho)} - F(x)\right)}_{\text{feedforward}}
      + \underbrace{G(x)^{-1}\sum_{i=0}^{\rho-1} K_i\left(z_r^{(i)} - z^{(i)}\right)}_{\text{feedback}},
    \label{eq:dfbc}
\end{equation}
where $K_i$ are positive-definite gain matrices driving the tracking error to zero with freely assignable poles. 

If the relative degree does not span the full state, equation~\eqref{eq:io} is complemented by internal dynamics, which are required to be input-to-state stable (ISS) with respect to $(z, \ldots, z^{(\rho-1)})$ for~\eqref{eq:dfbc} to stabilize the full state~\cite{khalil_nonlinear_2002}.

\begin{remark}[Disturbance rejection]
In practice, model uncertainty and disturbances cause imperfect cancellation of $F(x)$. A sensor-based inner loop, such as incremental nonlinear dynamic inversion (INDI), is often augmented around~\eqref{eq:dfbc} for model-independent disturbance rejection~\cite{sun_comparative_2022}. 
\end{remark}

\subsection{Guiding Vector Fields for Path Following}\label{sec:gvf}
\noindent Path following requires an agent to converge to and traverse a desired geometric path without imposing a timing law. One approach is to design a GVF whose integral curves converge to and follow the desired path~\cite{yao_guiding_2023}.

A desired path $\mathcal{P} \subset \mathbb{R}^n$ is implicitly represented as 
\begin{equation*}
    \mathcal{P} = \bigl\{\,\xi \in \mathbb{R}^n : \phi_i(\xi) = 0,\; i = 1, \ldots, n-1\,\bigr\},
\end{equation*}
where each $\phi_i : \mathbb{R}^n \to \mathbb{R}$ is $C^2$, so that $\mathcal{P}$ generically defines a one-dimensional curve via $n-1$ independent constraints. The GVF is constructed as
\begin{equation}
    \chi(\xi) = \chi_{\parallel}(\xi) - \chi_{\perp}(\xi),
    \label{eq:gvf}
\end{equation}
where the propagation term $\chi_{\parallel}(\xi):=\nabla_\times\phi(\xi)$ generates the motion tangentially along $\mathcal{P}$. The converging term
\[
    \chi_{\perp}(\xi) := \sum_{i=1}^{n-1} k_i\,\phi_i(\xi)\,\nabla\phi_i(\xi), \qquad k_i > 0,
\]
vanishes on $\mathcal{P}$, rendering the path invariant. For the weighted constraint error $V := \frac{1}{2}\sum_{i=1}^{n-1} k_i \phi_i^2$, the dynamics $\dot{\xi} = \chi$ satisfy $\dot{V} = -\|\chi_{\perp}\|^2 \leq 0$, since $\nabla\phi^\top \chi_{\parallel} = 0$. Only the direction of $\chi$ matters for this guarantee, since a positive rescaling preserves its integral curves up to time reparameterization. As $\|\chi\|$ is not constant along $\mathcal{P}$, the normalized GVF $\hat{\chi} := \chi/\|\chi\|$ may be used in its place, wherever $\chi \neq 0$, so that $\dot{\xi} = v\hat{\chi}(\xi)$ traverses $\mathcal{P}$ at the constant speed $v > 0$~\cite{kapitanyuk_guiding_2018}.

A limitation of GVFs is the possible existence of singular points, where $\chi(\xi) = 0$. For a single closed path in $\mathbb{R}^2$, which is a periodic orbit of the $C^1$ field $\chi$, a corollary of the Poincar\'{e}--Bendixson theorem guarantees at least one equilibrium inside the enclosed region~\cite{khalil_nonlinear_2002}, making singularities topologically unavoidable in this setting~\cite{yao_singularity-free_2021}.

To overcome this obstruction, the state is augmented with a virtual coordinate $w \in \mathbb{R}$ as in~\cite{yao_singularity-free_2021}. Given a parameterization $\xi_i = f_i(w)$, the path is lifted to $(\xi, w) \in \mathbb{R}^{n+1}$ through the $n$ implicit constraints $\tilde{\phi}_i(\xi,w) = \xi_i - f_i(w)$. The lifted path $\tilde{\mathcal{P}}$ is unbounded in $w$ and therefore non-self-intersecting, so it is not a periodic orbit and the topological obstruction disappears. A singularity-free GVF $\tilde{\chi}(\xi,w)$ is constructed following the same principles:
\begin{equation}
    \dot{\xi} = \tilde{\chi}_\xi(\xi, w), \qquad \dot{w} = \tilde{\chi}_w(\xi, w).
    \label{eq:gvf_ext}
\end{equation}
It is shown in~\cite{yao_singularity-free_2021} that $\tilde{\chi}(\xi,w) \neq 0$ for all $(\xi, w) \in \mathbb{R}^{n+1}$, guaranteeing global convergence of $\xi(t)$ to $\mathcal{P}$.

\section{PROPOSED ALGORITHM: FLATNESS-BASED GEOMETRIC PATH FOLLOWING}\label{sec:algorithm}
\noindent The DFBC framework in Section~\ref{sec:dfbc} assumes a time-parameterized reference trajectory $z_r(t)$, while the GVF framework in Section~\ref{sec:gvf} provides a stand-alone kinematic path-following law. The proposed method combines the two: rather than supplying a pre-specified $z_r(t)$, the reference velocity is generated by the guiding vector field evaluated at the agent's current position. This replaces temporal trajectory tracking with spatial path following, while inheriting the convergence guarantees of the GVF and retaining the model-inversion structure of DFBC. We refer to the resulting method as \emph{flatness-based geometric control} (FGC).

FGC is advantageous specifically when the mission specifies a geometric path without a required arrival-time profile, e.g., aerobatic maneuvers, inspection sweeps, or racing lines, where any feasible traversal is acceptable. Such paths often demand extreme motion or aggressive speed changes, for which a stand-alone GVF's kinematic guidance is insufficient and dynamics-informed compensation shows its merit.

Consider system~\eqref{eq:system} with flat output $z = h(x) = \xi$ and uniform relative degree $\rho$. As established in~\eqref{eq:dfbc}, the DFBC law requires a reference trajectory and its time derivatives up to order $\rho$. The GVF prescribes the desired flat-output velocity by setting $\dot{z}_r := \chi$. Higher-order references follow by successive differentiation along system trajectories,
\[
    z_r^{(k)} = \chi^{(k-1)}, \qquad k = 1, \ldots, \rho.
\]
Since the relative degree ensures that $u$ does not appear in $\xi^{(k)}$ for $k < \rho$, each derivative $\chi^{(k)}$ is computable from the current state $x$ alone. Substituting the GVF reference hierarchy into~\eqref{eq:dfbc} gives
\begin{equation}\label{eq:combined_law}
u = G(x)^{-1}\left(\chi^{(\rho-1)} - F(x)
  + \sum_{i=1}^{\rho-1} K_i\big(\chi^{(i-1)} - \xi^{(i)}\big)\right).
\end{equation}
The sum starts at $i=1$ rather than $i=0$, since the GVF prescribes a velocity reference $\dot{z}_r = \chi$ rather than a position reference. Path convergence is recovered separately via a cascaded argument on the resulting error dynamics, discussed later. The feedforward term $\chi^{(\rho-1)}$ encodes the spatial variation of the GVF along the current direction of motion.

The references $\chi^{(k)}$ required by~\eqref{eq:combined_law} are time derivatives of $\chi(\xi(t))$, where $\xi$ evolves along the system trajectories. Since $\chi$ depends on $t$ only through $\xi$, each derivative follows from the multivariate chain rule via the Fa\`a di Bruno formula~\cite{johnson_curious_2002}:
\begin{equation}\label{eq:faa_di_bruno}
    \chi^{(k)} = \sum_{p=1}^{k}
    \mathcal{D}^p\chi \cdot
    \mathcal{B}_{k,p}\!\left(\xi^{(1)}, \ldots, \xi^{(k-p+1)}\right).
\end{equation}
\begin{remark}[Smoothness Requirement]\label{rem:smoothness}
Formula~\eqref{eq:faa_di_bruno} requires $\chi \in C^\rho$ in $\xi$, which in turn requires $\phi \in C^{\rho+1}$. Algebraic level sets such as circles or lemniscates satisfy this automatically. For paths specified via splines, sextic, or higher construction guarantees $C^5$ continuity, sufficient for systems up to relative degree $\rho = 4$. Sinusoidal definitions are $C^\infty$ and impose no practical constraints. For piecewise or non-differentiable paths, such as linearly interpolated waypoints, mollification~\cite{gonzalez-calvin_efficient_2026} yields a $C^\infty$ smooth approximation, satisfying this requirement for any $\rho$.
\end{remark}
Identifying $\mathcal{D}^1\chi = J_\chi$ (Jacobian), $\mathcal{D}^2\chi = \mathcal{H}_\chi$ (Hessian), and $\mathcal{D}^3\chi = \mathcal{T}_\chi$ (third-order tensor), the first three instances expand as
\[
\begin{aligned}
\chi^{(1)} &= J_\chi\,\xi^{(1)}, \quad
\chi^{(2)} = J_\chi\,\xi^{(2)} + \mathcal{H}_\chi\!\left(\xi^{(1)},\xi^{(1)}\right), \\[4pt]
\chi^{(3)} &= J_\chi\,\xi^{(3)} + 3\,\mathcal{H}_\chi\!\left(\xi^{(1)},\xi^{(2)}\right)
            + \mathcal{T}_\chi\!\left(\xi^{(1)},\xi^{(1)},\xi^{(1)}\right).
\end{aligned}
\]
The leading term $p=1$ recovers the Jacobian. The remaining terms involve only $\xi^{(1)},\ldots,\xi^{(k-1)}$ and are thus computable from the current state. Each differentiation adds exactly one tensor of $\chi$ and one state derivative $\xi^{(k)}$, reflecting the system's integrator structure.

\subsection{Stability and Convergence Analysis}\label{sec:convergence}
\noindent Convergence is proven in two steps, first showing convergence of the tracking error, and then the consequent convergence of the physical trajectory to the desired path.

Define the error chain
\begin{equation}\label{eq:error_chain}
    e_i := \chi^{(i-1)} - \xi^{(i)}, \qquad i = 1, \ldots, \rho-1,
\end{equation}
so that $e_1 = \chi - \dot{\xi}$ is the velocity error and $e_2 = \dot{\chi} - \ddot{\xi}$ the acceleration error and so on. Differentiating gives $\dot{e}_i = e_{i+1}$ for $i = 1,\ldots,\rho-2$. For the last error, substituting~\eqref{eq:combined_law} into $\dot{e}_{\rho-1} = \chi^{(\rho-1)} - \xi^{(\rho)}$ and using~\eqref{eq:io} yields
\[
    \dot{e}_{\rho-1} = -\sum_{i=1}^{\rho-1} K_i\,e_i.
\]
The error dynamics are therefore the linear time-invariant system
\begin{equation}
\label{eq:ErrorMat}
\dot{e} = A_e\,e
\end{equation}
with characteristic polynomial $s^{\rho-1} + K_{\rho-1}s^{\rho-2} + \cdots + K_1 = 0$. Choosing $K_1,\ldots,K_{\rho-1}$ such that $A_e$ is Hurwitz renders $e = 0$ globally exponentially stable~\cite{khalil_nonlinear_2002}, i.e.,
\begin{equation}\label{eq:e_bound}
    \|e(t)\| \leq c\,\|e(0)\|\,e^{-\lambda t}, \qquad \forall\,t \geq 0,
\end{equation}
for some $c \geq 1$ and $\lambda > 0$.

The bound~\eqref{eq:e_bound} establishes that $e_1 = \chi(\xi) - \dot{\xi} \to 0$ exponentially, meaning the agent velocity converges to the GVF. Convergence to $\mathcal{P}$ must follow from the structure of $\chi$ itself. Writing $\dot{\xi} = \chi - e_1$ and differentiating $\phi(\xi(t))$,
\[
    \dot{\phi} = \nabla\phi^\top\chi - \nabla\phi^\top e_1.
\]
Substituting $\chi = \chi_\parallel - k_e\phi\,\nabla\phi$ from~\eqref{eq:gvf} and using the tangency $\nabla\phi^\top\chi_\parallel = 0$ gives
\begin{equation}\label{eq:phi_dynamics}
    \dot{\phi} = \underbrace{-k_e\,\phi\,\|\nabla\phi\|^2}_{\text{stabilizing}} \underbrace{-\,\nabla\phi^\top e_1}_{\text{perturbation}}.
\end{equation}
This reveals a cascade structure: when $e_1 = 0$ the dynamics reduce to $\dot{\phi} = -k_e\phi\|\nabla\phi\|^2$, driving $\phi \to 0$ exponentially. The perturbation vanishes exponentially by~\eqref{eq:e_bound}.

\begin{proposition}[Convergence to the Path]\label{prop:path_conv}
Consider system~\eqref{eq:system} under the control law~\eqref{eq:combined_law}, with constant gain matrices $K_1,\dots,K_{\rho-1}$ chosen so that matrix $A_e$ in \eqref{eq:ErrorMat} is Hurwitz. Assume that the internal dynamics are ISS whenever~${m\rho < n_x}$. Let $x(\cdot):[0,\infty)\rightarrow \mathbb{R}^{n_x}$ be any solution of the closed-loop system remaining in a domain where the input--output representation~\eqref{eq:io} is valid and $G(x(t))$ is nonsingular. If there exists $c_{\min} > 0$ such that
\begin{equation}
\label{eq:grad_cond}
    \|\nabla\phi(\xi(t))\| \geq c_{\min} \qquad \text{for all } t \geq 0,
\end{equation}
then $\phi(\xi(t)) \to 0$ exponentially as $t \to \infty$.
\end{proposition}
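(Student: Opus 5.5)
The plan is a cascade argument on the path-error dynamics~\eqref{eq:phi_dynamics}, treating $\nabla\phi^\top e_1$ as an exponentially decaying input. First, the well-posedness hypotheses make the error-chain derivation valid for all $t \geq 0$: the trajectory stays where~\eqref{eq:io} holds and $G(x(t))$ is invertible, and ISS of the internal dynamics, when $m\rho < n_x$, keeps the unobserved states bounded. Hence the closed loop is exactly~\eqref{eq:ErrorMat}, and the bound~\eqref{eq:e_bound} gives $\|e_1(t)\| \leq c\|e(0)\|e^{-\lambda t}$. I would write the argument for scalar $\phi$ and gain $k_e$, matching~\eqref{eq:phi_dynamics}. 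The vector case follows by replacing $\phi^2$ with the weighted $V=\tfrac12\sum_i k_i\phi_i^2$ and using $\dot V = -\|\chi_\perp\|^2 - \chi_\perp^\top e_1$.

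Take $W = \tfrac12\phi^2$. Along trajectories,
\[
\dot W = -k_e\phi^2\|\nabla\phi\|^2 - \phi\,\nabla\phi^\top e_1 .
\]
The first term is bounded using~\eqref{eq:grad_cond}: $-k_e\phi^2\|\nabla\phi\|^2 \leq -2k_e c_{\min}^2 W$.

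The perturbation term needs a bound on $\|\nabla\phi(\xi(t))\|$ along the solution, and this is the main obstacle. Condition~\eqref{eq:grad_cond} gives only a lower bound, and $\xi(t)$ is not yet known to stay bounded. I would avoid the issue by working with $|\phi|$ directly and normalizing. For $\phi\neq 0$,
\[
\frac{d}{dt}|\phi| \leq -k_e\|\nabla\phi\|^2|\phi| + \|\nabla\phi\|\,\|e_1\| .
\]
This bound alone does not close, because $\|\nabla\phi\|$ multiplies the perturbation. So I would split on the size of $|\phi|$. Where $k_e\|\nabla\phi\|\,|\phi| \geq 2\|e_1\|$, the right-hand side is at most $-\tfrac{k_e}{2}\|\nabla\phi\|^2|\phi| \leq -\tfrac{k_e c_{\min}^2}{2}|\phi|$. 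Otherwise $|\phi| < 2\|e_1\|/(k_e\|\nabla\phi\|) \leq 2\|e_1\|/(k_e c_{\min})$. In both cases the gradient magnitude enters only favourably, via $c_{\min}$. This is the step I expect to need the most care. If a more conventional route is preferred, the fallback is to assume $\|\nabla\phi\|\leq c_{\max}$ along the trajectory. That bound holds, for instance, if $\xi$ stays in a compact set, which follows from boundedness of $\dot\xi = \chi - e_1$ on finite intervals together with $\phi$ being proper.

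From the split, a comparison argument gives
\[
|\phi(t)| \leq \max\!\Big\{ |\phi(0)| e^{-\mu t},\ \tfrac{2}{k_e c_{\min}} \sup_{s\in[t/2,t]}\|e_1(s)\| \Big\} + (\text{decaying term}), \qquad \mu = \tfrac{k_e c_{\min}^2}{2}.
\]
The supremum is at most $\tfrac{2c\|e(0)\|}{k_e c_{\min}} e^{-\lambda t/2}$. I would make this rigorous with a standard ISS-type estimate for the scalar differential inequality $\dot y \leq -\mu y$ on the set $\{y \geq \gamma(t)\}$ with $\gamma(t)=\alpha e^{-\lambda t}$. Once $y$ drops below $\gamma$, it can rise above $\gamma$ only by a bounded factor, and then decays again at rate $\min(\mu,\lambda)$.

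The result is $|\phi(t)| \leq C e^{-\min(\mu,\lambda/2)t}$ for a constant $C$ depending on $\phi(0)$ and $e(0)$. This is exponential convergence of $\phi(\xi(t))$ to zero, as claimed in Proposition~\ref{prop:path_conv}.
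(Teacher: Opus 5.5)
Your proof is correct and uses the same cascade skeleton as the paper, but you close the cross term with a different device.

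The paper keeps $V_\phi=\tfrac12\phi^2$ and applies Young's inequality with $a=|\phi|\,\|\nabla\phi\|$ and $b=\|e_1\|$. Grouping the gradient into $a$ absorbs the perturbation into $-k_e\phi^2\|\nabla\phi\|^2$ before $c_{\min}$ is invoked. The obstacle you flag, the missing upper bound on $\|\nabla\phi\|$, therefore disappears in one line, and your $c_{\max}$ fallback is unnecessary on either route. The resulting linear inequality $\dot V_\phi\le-\beta V_\phi+\gamma\|e_1\|^2$ integrates explicitly to the envelope~\eqref{eq:phi_bound} used in the simulations. Its rate $\min\{(k_e-\varepsilon)c_{\min}^2,\lambda\}$ approaches the unperturbed $k_e c_{\min}^2$ as $\varepsilon\to0$, at the cost of a growing constant.

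Your split on $k_e\|\nabla\phi\|\,|\phi|\gtrless 2\|e_1\|$ exploits the same observation. It yields an ISS/gain-type estimate and needs the hybrid threshold comparison you sketch.
\begin{itemize}
\item Carried out with $\bar\gamma(t)=\alpha e^{-\lambda t}$ and $\alpha=2c\|e(0)\|/(k_e c_{\min})$, it gives directly $|\phi(t)|\le\max\{|\phi(0)|,\alpha\}\,e^{-\min(\mu,\lambda)t}$. This is sharper than your stated $\min(\mu,\lambda/2)$, so the $\sup_{s\in[t/2,t]}$ step is superfluous.
\item It needs no integration and has no special case $\beta=2\lambda$.
\item Your factor-$2$ threshold halves the rate. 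A threshold $1/\delta$ recovers $(1-\delta)k_e c_{\min}^2$, the same constant-versus-rate trade-off as $\varepsilon$.
\end{itemize}

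Two small remarks.
\begin{itemize}
\item ISS of the internal dynamics does not by itself keep those states bounded; that would need bounded flat-output derivatives. Your argument does not need this anyway: the solution is assumed to exist on $[0,\infty)$ in the valid domain, and the error dynamics are then exactly LTI.
\item In the vector case, your identity $\dot V=-\|\chi_\perp\|^2-\chi_\perp^\top e_1$ closes by exactly the paper's Young step once the hypothesis is read as $\sigma_{\min}(N)\ge c_{\min}$. It does so even with unequal gains $k_i$, which the paper does not allow.
\end{itemize}
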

\begin{proof}
Let $V_\phi = \tfrac{1}{2}\phi^2$. Bounding the perturbation term in~\eqref{eq:phi_dynamics} by Cauchy--Schwarz and applying Young's inequality~\cite{khalil_nonlinear_2002} with $a = |\phi|\,\|\nabla\phi\|$ and $b = \|e_1\|$ absorbs it into the stabilizing term, giving
\[
    \dot{V}_\phi \leq -\beta\,V_\phi + \gamma\,\|e_1(t)\|^2,
\]
with $\beta = 2(k_e - \varepsilon)c_{\min}^2 > 0$ and $\gamma = 1/(4\varepsilon)$ for any $\varepsilon \in (0, k_e)$. By the comparison lemma~\cite{khalil_nonlinear_2002}:
\[
    V_\phi(t) \leq e^{-\beta t}V_\phi(0) + \gamma\!\int_0^t\! e^{-\beta(t-s)}\|e_1(s)\|^2\,\mathrm{d}s.
\]
Since $\|e_1(t)\|^2 \leq c^2\|e(0)\|^2 e^{-2\lambda t}$ by~\eqref{eq:e_bound}, evaluating the integral (for $\beta \neq 2\lambda$) gives
\begin{equation}\label{eq:phi_bound}
    V_\phi(t) \leq \bigl(V_\phi(0) - C\bigr)e^{-\beta t} + C\,e^{-2\lambda t}, \quad C \triangleq \frac{\gamma\,c^2\|e(0)\|^2}{\beta - 2\lambda},
\end{equation}
{so $\phi(\xi(t)) \to 0$ exponentially with rate $\min(\beta/2, \lambda)$. The argument extends to $n-1$ constraints with common gain $k_i = k_e$ and $V_\phi = \tfrac12\|\phi\|^2$, replacing $\|\nabla\phi\|$ by the smallest singular value of the path-error Jacobian $N = [\nabla\phi_1, \ldots, \nabla\phi_{n-1}]$, i.e. $\sigma_{\min}(N) \geq c_{\min} > 0$ along the path.}
\end{proof}
\begin{remark}[Non-Vanishing Gradient]
The assumption $\|\nabla\phi\| \geq c_{\min} > 0$ is the key geometric hypothesis of Proposition~\ref{prop:path_conv}, and fails wherever $\nabla\phi$ vanishes. For closed paths such points are unavoidable with a standard GVF, motivating the singularity-free extension: lifting the path via a virtual coordinate guarantees $\|\nabla\tilde{\phi}\| > 0$ everywhere, so the condition holds globally.
\end{remark}

\subsection{Extension to the Singularity-Free GVF}\label{sec:sf_gvf}
\noindent In the singularity-free formulation~\cite{yao_singularity-free_2021}, the agent state is lifted to the augmented state $\zeta = (\xi^\top, w)^\top \in \mathbb{R}^{n+1}$,  where $w \in \mathbb{R}$ is the virtual coordinate evolving as $\dot{w} = \tilde{\chi}_w(\xi,w)$ per~\eqref{eq:gvf_ext}.

The combined control law~\eqref{eq:combined_law} carries over with $\chi^{(k)}$ replaced by $\tilde{\chi}^{(k)}$, computed via the Fa\`a di Bruno formula applied to the augmented state,
\begin{equation}\label{eq:faa_di_bruno_sf}
    \tilde{\chi}^{(k)}
    = \sum_{p=1}^{k}
      \mathcal{D}^p\tilde{\chi} \cdot
      \mathcal{B}_{k,p}\!\left(\zeta^{(1)}, \ldots, \zeta^{(k-p+1)}\right),
\end{equation}
where $\mathcal{D}^p\tilde{\chi}$ is now the $p$-th order derivative of $\tilde{\chi}$ with respect to $\zeta \in \mathbb{R}^{n+1}$. The leading term recovers
\[
    \tilde{\chi}^{(1)} = J_{\tilde{\chi}}\,\dot{\zeta}
    = \frac{\partial\tilde{\chi}}{\partial\xi}\,\dot{\xi}
    + \frac{\partial\tilde{\chi}}{\partial w}\,\dot{w},
\]
confirming that the virtual coordinate contribution enters naturally through the extended Jacobian, with all higher-order terms generated automatically by the Bell polynomials.

\begin{proposition}[Singularity-Free Convergence]\label{prop:sf_conv}
Consider the setting of Proposition~\ref{prop:path_conv} with the singularity-free GVF~\eqref{eq:gvf_ext} and $\chi^{(k)}$ in the control law~\eqref{eq:combined_law} replaced by $\tilde{\chi}^{(k)}$ from~\eqref{eq:faa_di_bruno_sf}. Then the gradient condition~\eqref{eq:grad_cond} holds globally with ${c_{\min} = 1}$. Moreover, for any solution  \({(x(\cdot),w(\cdot)):[0,\infty)\to\mathbb R^{n_x}\times\mathbb R}\)  of the augmented closed-loop system such that \(x(t)\) remains in a domain where the input–output representation~\(\eqref{eq:io}\) is valid and \(G(x(t))\) is nonsingular for all \(t\geq0\), $\tilde{\phi}(\xi(t), w(t)) \to 0$ exponentially as $t \to \infty$.
\end{proposition}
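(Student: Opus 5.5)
The plan is to reduce to Proposition~\ref{prop:path_conv}, applied in the augmented space $\zeta=(\xi^\top,w)^\top\in\mathbb{R}^{n+1}$ with the $n$ constraints $\tilde\phi_i(\xi,w)=\xi_i-f_i(w)$. The argument has three steps.

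\textbf{Step 1: the gradient condition.} First I verify~\eqref{eq:grad_cond} by direct computation. One has $\nabla\tilde\phi_i=(e_i^\top,-f_i'(w))^\top$, where $e_i$ is the $i$-th standard basis vector of $\mathbb{R}^n$. So the path-error Jacobian is
\[
\tilde N=\begin{bmatrix} I_n \\ -f'(w)^\top\end{bmatrix}\in\mathbb{R}^{(n+1)\times n},
\qquad
\tilde N^\top\tilde N=I_n+f'(w)f'(w)^\top\succeq I_n .
\]
Hence $\sigma_{\min}(\tilde N)\ge 1$ for every $(\xi,w)$, and in particular $\|\nabla\tilde\phi_i\|\ge 1$. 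This gives $c_{\min}=1$ globally and for every trajectory, with no assumption on where $\xi(t)$ goes. It also shows that $\tilde\chi\neq 0$: the propagation term is orthogonal to the range of $\tilde N$, and $\tilde N$ has full column rank.

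\textbf{Step 2: error dynamics.} Next I redo the error-chain argument with $\tilde\chi^{(k)}$ in place of $\chi^{(k)}$. Only the $\xi$-block of $\tilde\chi$ serves as the flat-output reference. So I define $e_i:=\tilde\chi_\xi^{(i-1)}-\xi^{(i)}$, while $w$ evolves exactly by $\dot w=\tilde\chi_w$.

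Since $w$ is a controller state with known dynamics, $\dot\zeta$ is available from $x$ and $w$. The Fa\`a di Bruno expansion~\eqref{eq:faa_di_bruno_sf} then gives $\tilde\chi^{(k)}$ as a function of $(x,w)$ for $k\le\rho-1$, by the same relative-degree argument as before. Substituting into~\eqref{eq:io} again yields $\dot e=A_e e$, so the bound~\eqref{eq:e_bound} holds. The ISS assumption on the internal dynamics enters exactly as in Proposition~\ref{prop:path_conv}.

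\textbf{Step 3: constraint dynamics.} For the constraint error I set $\tilde e_1:=(e_1^\top,0)^\top$, so that $\dot\zeta=\tilde\chi-\tilde e_1$. Then
\[
\dot{\tilde\phi}=\tilde N^\top\tilde\chi-\tilde N^\top\tilde e_1=-k_e\tilde N^\top\tilde N\tilde\phi-e_1 ,
\]
using the tangency $\tilde N^\top\tilde\chi_\parallel=0$ and the fact that the $\xi$-rows of $\tilde N^\top\tilde e_1$ give exactly $e_1$.

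With $V_\phi=\tfrac12\|\tilde\phi\|^2$, the bound $\tilde N^\top\tilde N\succeq I$ gives $\dot V_\phi\le -2k_eV_\phi+\|\tilde\phi\|\|e_1\|$. The Young/comparison argument of Proposition~\ref{prop:path_conv} then yields the same estimate~\eqref{eq:phi_bound} with $c_{\min}=1$. This gives exponential convergence of $\tilde\phi(\xi(t),w(t))$, hence of $\xi(t)-f(w(t))$.

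\textbf{Main obstacle.} The delicate point is Step 2, that is, checking that the augmented reference hierarchy is well defined and causal. The virtual coordinate is not part of the flat output, and the error chain only tracks the $\xi$-components. So one must argue that $w$'s exact integration introduces no additional error term. One must also check that $\zeta^{(k)}$ for $k\le\rho-1$ contains no $u$, since $w^{(k)}$ depends on $\xi^{(j)}$ for $j<k$ only.

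A secondary point is the gain convention. The paper uses a common gain $k_e$. If the constraint gains $k_i$ differ, I would replace $k_e$ by $\min_i k_i$ in the bound $\tilde N^\top K\tilde N\succeq \min_i k_i\, I$. I expect this to go through, since $\tilde N$ has the identity as a block.
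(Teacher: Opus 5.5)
Your proposal is correct and follows the same route as the paper: the $\xi$-component error chain is unchanged, $\tilde N^\top\tilde N = I_n + f'(w)f'(w)^\top \succeq I_n$ gives $c_{\min}=1$ globally, and Proposition~\ref{prop:path_conv} is then applied in the augmented space. Your explicit derivation $\dot{\tilde\phi} = -k_e\tilde N^\top\tilde N\tilde\phi - e_1$, with perturbation $(e_1^\top,0)^\top$ because $w$ is integrated exactly, spells out what the paper compresses into ``it applies in the augmented space.'' The only slip is in your optional aside on unequal gains: $\tilde N^\top K\tilde N$ is dimensionally inconsistent, and with $V_\phi=\tfrac12\|\tilde\phi\|^2$ the relevant matrix $\tilde N^\top\tilde N K$ is not symmetric and can be indefinite for large $|f'|$, so you should instead use $V_\phi=\tfrac12\tilde\phi^\top K\tilde\phi$, which gives $\dot V_\phi\le -\|K\tilde\phi\|^2+\|K\tilde\phi\|\,\|e_1\|$ and hence the rate with $\min_i k_i$.
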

\begin{proof}
The physical error chain $\tilde{e}_i = \tilde{\chi}_\xi^{(i-1)} - \xi^{(i)}$ obeys the same linear cascade as before, since the control law~\eqref{eq:combined_law} assigns these poles independently of the reference, and~\eqref{eq:faa_di_bruno_sf} supplies the exact derivatives of $\tilde{\chi}_\xi$ along the closed-loop trajectory, differentiating through the virtual coordinate $w$. Hence $\tilde{e}(t) \to 0$ exponentially. For the path error, stacking the lifted constraint gradients gives the augmented path-error Jacobian $\tilde{N} = [\,I_n;\, -f'(w)^\top\,] \in \mathbb{R}^{(n+1)\times n}$, so $\tilde{N}^\top\tilde{N} = I_n + f'(w)f'(w)^\top \succeq I_n$ and hence $\sigma_{\min}(\tilde{N}) \geq 1$ for all $\zeta \in \mathbb{R}^{n+1}$. The hypothesis of Proposition~\ref{prop:path_conv} therefore holds globally with $c_{\min} = 1$, and it applies in the augmented space to give $\tilde{\phi}(\zeta(t)) \to 0$ exponentially. Since $\tilde{\phi} \to 0$ implies $\xi \to f(w) \in \mathcal{P}$, the physical trajectory $\xi(t)$ converges exponentially to $\mathcal{P}$.
\end{proof}
The smoothness requirement of Remark~\ref{rem:smoothness} applied to $\tilde{\chi}$ as a function of $\zeta$ translates directly to $f_i \in C^{\rho+1}$ for the path parameterization $\xi_i = f_i(w)$.

\section{NUMERICAL SIMULATIONS}\label{sec:simulations}
\noindent MATLAB simulations validate the proposed algorithms across three scenarios: verifying the convergence bound, testing internal dynamics, and the singularity-free extension.

\subsection{Circular Path With Radial Drift}\label{sec:sim_gvf}
\noindent Consider the double integrator in $\mathbb{R}^2$:
\begin{equation}\label{eq:exmp_sys}
        \dot{\xi} = \nu, \quad
    \dot{\nu} = f(\xi) + u, \quad
    \xi,\nu,u \in \mathbb{R}^2,
\end{equation}
with radial drift $f(\xi) = \,\xi/\|\xi\|$. The flat output is $z = \xi$, $\rho = 2$, and $x = (\xi,\nu) \in \mathbb{R}^4$, giving $F(\xi) = f(\xi)$ and $G(\xi) = I_2$. The agent follows a circle of radius $R = 5$, $\phi(\xi) = \xi_1^2 + \xi_2^2 - R^2$, with guiding vector field and Jacobian:
\[
\begin{aligned}
        \chi(\xi) &= \begin{bmatrix} -2\xi_2 \\ 2\xi_1 \end{bmatrix}
        - k_e\,\phi(\xi)\begin{bmatrix} 2\xi_1 \\ 2\xi_2 \end{bmatrix},\\
    J_\chi(\xi) &= \begin{bmatrix} 0 & -2 \\ 2 & 0 \end{bmatrix}
        - k_e\!\left(\nabla\phi\,\nabla\phi^\top + 2\phi\,I_2\right).
\end{aligned}
\]
Since $\chi$ depends on $\xi$ alone, the first-order reference derivative is exact, $\chi^{(1)}(\xi,\nu) = J_\chi(\xi)\,\nu$ (first term of~\eqref{eq:faa_di_bruno}), and the $\rho = 2$ instance of~\eqref{eq:combined_law} reduces to
\[
    u = J_\chi(\xi)\,\nu - F(\xi) + K_1\bigl(\chi(\xi) - \nu\bigr),
\]
simulated with $k_e = 0.5$, $K_1 = 3$, and $\xi_0 = [6,\,0]^\top$.

The velocity tracking error $e_1$ decays exponentially at a constant rate, per~\eqref{eq:e_bound}. Figure~\ref{fig:combined_traj} shows the trajectory converging to the circle and traversing it despite the drift. The path error $\phi(\xi)$ in Figure~\ref{fig:combined_error} also decays exponentially, as guaranteed by Proposition~\ref{prop:path_conv}, but is visibly nonlinear: the cascaded structure drives $\phi$ both by its own stabilizing dynamics and by the perturbation $e_1$. The error nonetheless remains within the envelope predicted by~\eqref{eq:phi_bound}.

\begin{figure}[t]
    \centering
    \includegraphics[width=.95\linewidth]{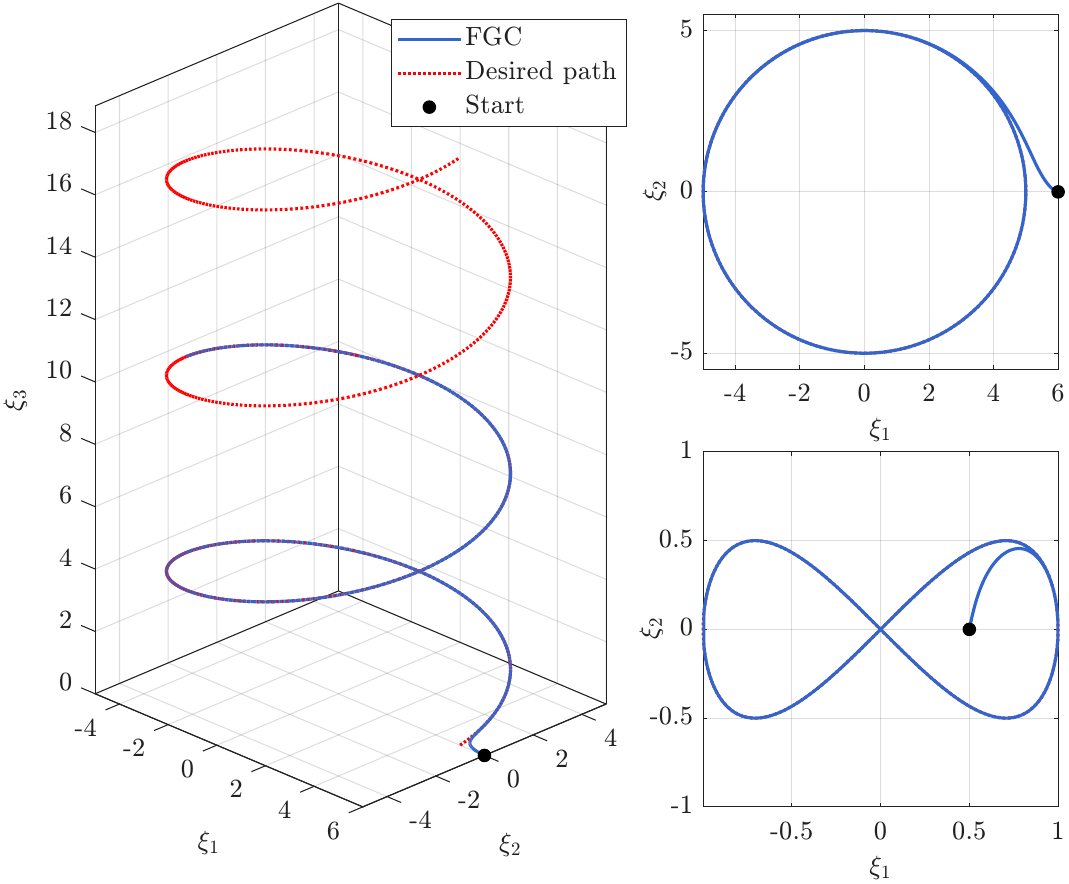}
    \caption{Agent trajectories (solid) converging to and traversing their desired paths (dashed) under FGC. Left: a helical path under drift induced by a coupled internal state. Top right: a circular path under radial drift. Bottom right: a self-intersecting figure-eight path using the singularity-free extension, traversing the crossing at the origin without stalling.}
    \vspace{-1em}
    \label{fig:combined_traj}
\end{figure}

\subsection{Helical Path with Internal Dynamics}
\noindent Extending to $\mathbb{R}^3$ and coupling in a state-dependent internal state, consider
\[
    \dot{\xi} = \nu, \quad
    \dot{\nu} = -\eta \circ \xi + u, \quad
    \dot{\eta} = -\eta + \xi^{\circ 2},
    \quad \xi,\nu,\eta,u \in \mathbb{R}^3.
\]
The flat output is $z = \xi$ with $\rho = 2$ and full state $x = (\xi,\nu,\eta) \in \mathbb{R}^9$, giving $F(x) = L_f^2h(x) = -\eta\circ\xi$ and $G(x) = L_gL_fh(x) = I_3$. On the zero surface $\xi = 0,\ \nu = 0$, the internal dynamics reduce to $\dot\eta = -\eta$, which is globally exponentially stable, confirming the cascade is minimum phase. The path is a helix of radius $R = 5$ and pitch $c = 1$, represented implicitly by
\[
    \phi_1(\xi) = \xi_1^2 + \xi_2^2 - R^2, \quad
    \phi_2(\xi) = \xi_1 \sin\!\left(\tfrac{\xi_3}{c}\right)
                - \xi_2 \cos\!\left(\tfrac{\xi_3}{c}\right),
\]
the first constraining the agent to a cylinder of radius $R$, the second fixing its angular phase along the cylinder to trace a helix of pitch $c$. With $g_i = \nabla\phi_i$ s.t. $g_1 = \begin{bmatrix} 2\xi_1 & 2\xi_2 & 0 \end{bmatrix}^\top$, ~${g_2 = \begin{bmatrix} \sin(\frac{\xi_3}{c}) & -\cos(\frac{\xi_3}{c}) & \bigl(\xi_1\cos(\frac{\xi_3}{c}) + \xi_2\sin(\frac{\xi_3}{c})\bigr)/c \end{bmatrix}^\top}$, the propagation direction is $\tau(\xi) = g_2 \times g_1$, spanning the null space of $\begin{bmatrix} g_1^\top & g_2^\top \end{bmatrix}$, and the guiding vector field is
\[
    \chi(\xi) = \gamma\,\frac{\tau(\xi)}{\|\tau(\xi)\|}
        - k_e\bigl(\phi_1(\xi)\,g_1 + \phi_2(\xi)\,g_2\bigr).
\]
Then, $J_\chi$ follows by differentiation, and the control law is
\[
    u = J_\chi(\xi)\,\nu - F(x) + K_1\bigl(\chi(\xi) - \nu\bigr),
\]
with $k_e = 0.5$, $\gamma = 2$, $K_1 = 8$, and $\xi_0 = [6,\,0,\,0]^\top$.

As shown in Figure~\ref{fig:combined_traj}, the agent converges and follows the helix despite the drift induced by the internal state $\eta$. Rather than the pronounced nonlinear profile seen for the circle, Figure~\ref{fig:combined_error} shows a path error that decays nearly linearly on the logarithmic scale, a result of the interplay between the two path constraints. The ascending component $\phi_2$ dominates the circular component $\phi_1$ and sets the effective convergence rate. The $\eta$ subsystem satisfies the input-to-state stability condition, confirming the law remains applicable when the relative degree does not span the full state.

\begin{figure}[ht]
    \centering
    \includegraphics[width=1\linewidth]{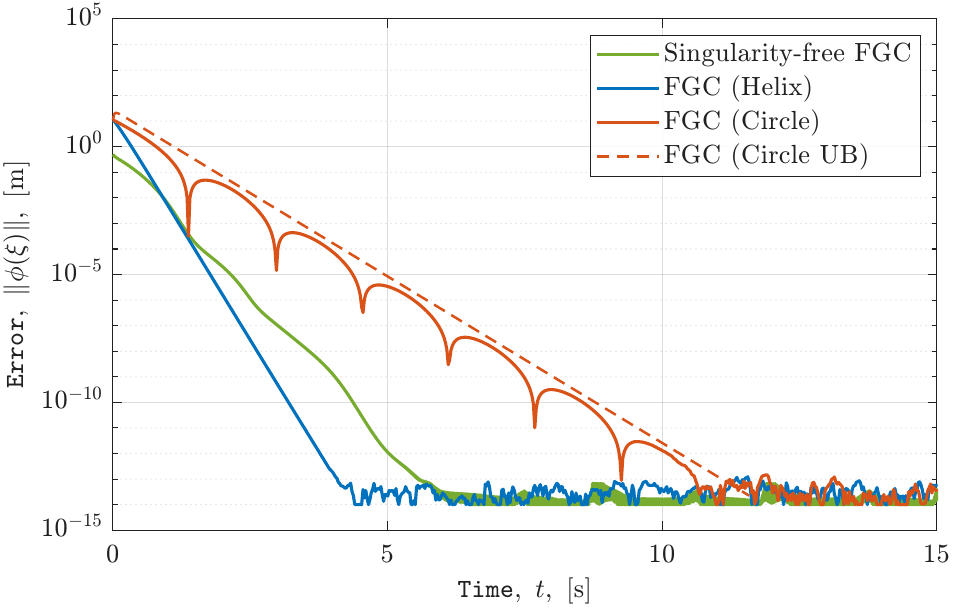}
    \caption{Path-following error $\|\phi(\xi(t))\|$ on a semi-logarithmic
    scale for the helical (blue), circular (orange), and self-intersecting
    figure-eight (green) paths, with the corresponding theoretical
    convergence bound (dashed orange) overlaid for the circular case. All
    three cases exhibit exponential convergence to the desired path.}
    \label{fig:combined_error}
\end{figure}
\vspace{-1em}

\subsection{Singularity-Free FGC}\label{sec:sim_sf}
\noindent Consider a system of type \eqref{eq:exmp_sys}, subject to a Coriolis-like drift $f(\xi) = \beta \begin{bmatrix} \xi_2 & -\xi_1 \end{bmatrix}^\top$ of magnitude $\beta = 0.3$. The target is a self-intersecting figure-eight, which admits a singularity under the standard GVF, so the singularity-free construction is used. The curve is parameterized by the virtual coordinate $w \in \mathbb{R}$ via
$
    f_1(w) = R\sin w, \, f_2(w) = R\sin w \cos w,
$
with $R = 1$, and lifted to the extended state $\zeta = (\xi,w) \in \mathbb{R}^3$ through the implicit constraints
$
    \tilde\phi_i(\xi,w) = \xi_i - f_i(w), \, i = 1,2.
$
Following the same converging/propagating split as~\eqref{eq:gvf}, the singularity-free field is
\[
\begin{aligned}
        \tilde\chi_\xi(\zeta) &= \begin{bmatrix}
        f_1'(w) - k_1\,\tilde\phi_1 \\ f_2'(w) - k_2\,\tilde\phi_2
    \end{bmatrix},\\
    \tilde\chi_w(\zeta) &= 1 + k_1\,\tilde\phi_1 f_1'(w) + k_2\,\tilde\phi_2 f_2'(w).
\end{aligned}
\]
The leading-order reference derivative follows Section~\ref{sec:sf_gvf} with \(\tilde\chi^{(1)}(\zeta) = J_{\tilde\chi}(\zeta)\,\dot\zeta,\) with physical component $\tilde\chi_\xi^{(1)}(\zeta)$. The $\rho = 2$ instance of~\eqref{eq:combined_law} and virtual coordinate update read
\[
    u = \tilde\chi_\xi^{(1)}(\zeta) - F(\xi)
        + K_1\bigl(\tilde\chi_\xi(\zeta) - \nu\bigr), \quad
    \dot{w} = \tilde\chi_w(\zeta),
\]
with $k_1 = k_2 = k_e = 4$, $K_1 = 12$, and $\xi_0 = [0.5,\,0]^\top$.

Figure~\ref{fig:combined_traj} shows the agent converging onto and traversing the figure-eight, including the self-intersection at the origin. Figure~\ref{fig:fig8_extended} shows the same trajectory in the extended space $(\xi_1,\xi_2,w)$, where the lifted path is non-self-intersecting and the agent converges onto it monotonically in $w$, resolving the crossing at $\xi = 0$ through the virtual coordinate. Exponential convergence is thus achieved on a self-intersecting path, which the standard GVF construction cannot deliver. As before, Figure~\ref{fig:combined_error} shows nonlinear path-error behavior following from the same cascaded argument.

\begin{figure}[ht]
    \centering
    \includegraphics[width=1\linewidth]{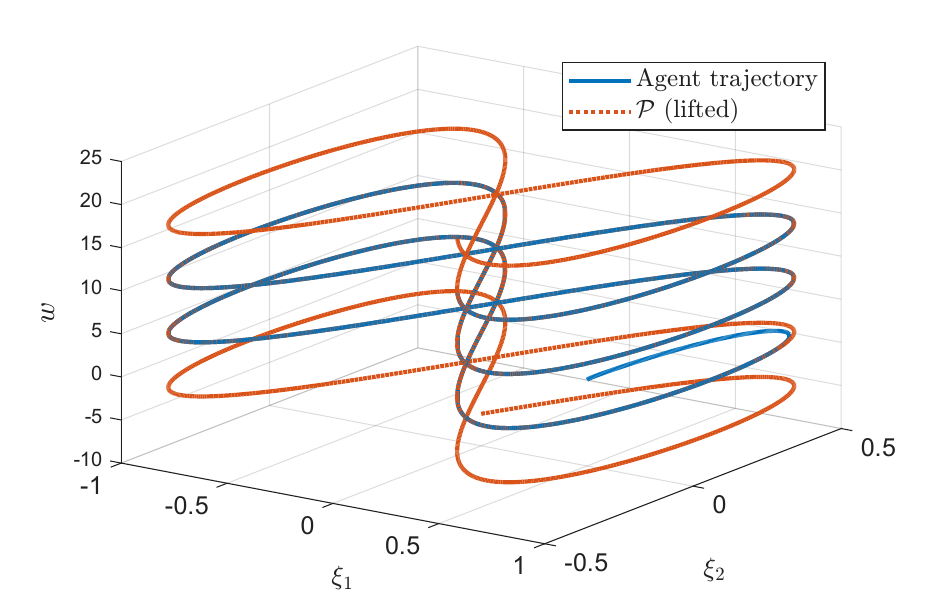}
    \caption{Agent trajectory in the extended state space $(\xi_1,\xi_2,w)$
    under the singularity-free construction. The lifted reference path
    (dashed) is non-self-intersecting, and the agent's trajectory (solid)
    converges onto it, resolving the crossing at $\xi = 0$ through the
    virtual coordinate $w$.}
    \label{fig:fig8_extended}
\end{figure}

\section{CONCLUSION}\label{sec:conclusion}
\noindent This letter presented \emph{flatness-based geometric control} (FGC), a control law that generates the DFBC reference hierarchy directly from a guiding vector field, rather than a pre-planned trajectory. A cascade Lyapunov argument established exponential convergence to the desired path, with the gradient condition made global via singularity-free guiding vector fields. Simulations confirmed these results on helical, circular, and self-intersecting figure-eight paths.

\bibliographystyle{IEEEtran}
\bibliography{references}
\end{document}